\theoremstyle{remark} \newtheorem{theorem}{Theorem}
\theoremstyle{remark} 
\newtheorem{proposition}{Proposition}
\theoremstyle{remark} \newtheorem{remark}{Remark}
\newcommand{\eps}{\epsilon}
\newcommand{\styp}{A_{\epsilon}^{*(n)}}
\newcommand{\stypm}{A^{*(m)}_{\epsilon}}
\newcommand{\mX}{\mathcal{X}}
\newcommand{\mY}{\mathcal{Y}}
\newcommand{\mW}{\mathcal{W}}
\newcommand{\mS}{\mathcal{S}}
\newcommand{\mCN}{\mathcal{CN}}
\newcommand{\tU}{\tilde{U}}
\newcommand{\yvec}{\mathbf{y}}
\newcommand{\xvec}{\mathbf{x}}
\newcommand{\svec}{\mathbf{s}}
\newcommand{\vvec}{\mathbf{v}}
\newcommand{\wvec}{\mathbf{w}}
\newcommand{\tsvec}{\tilde{\svec}}
\newcommand{\msg}{u}
\newcommand{\E}{\mathds{E}}
\newcommand{\CN}{\mathcal{CN}}
\newcommand{\setR}{\mathfrak{R}}
\newcommand{\setC}{\mathfrak{C}}
\newcommand{\chR}{\hat{R}}
\long\def\symbolfootnote[#1]#2{\begingroup\def\thefootnote{\fnsymbol{footnote}}\footnote[#1]{#2}\endgroup}
\newcommand{\thmlabel}[1]{\label{thm:#1}}
\newcommand{\thmref}[1]{\ref{thm:#1}}
\newcommand{\Thmref}[1]{Thm.~\thmref{#1}}
\title{Source-Channel Coding for the Multiple-Access Relay Channel
\thanks{
\noindent
This work was supported by the European Commission's Marie Curie IRG Fellowship PIRG05-GA-2009-246657  under the Seventh Framework Programme.}
}
\author{
\IEEEauthorblockN{Yonathan Murin, Ron Dabora}
\authorblockA{Department of Electrical and Computer Engineering \\
Ben-Gurion University, Israel\\
Email: moriny@bgu.ac.il, ron@ee.bgu.ac.il}
%\and
%\IEEEauthorblockN{Ron Dabora}
%\authorblockA{Department of Electrical and Computer Engineering \\
%Ben-Gurion University, Israel\\
%Email: ron@ee.bgu.ac.il}

\vspace{-1.1cm}

\and
\IEEEauthorblockN{Deniz G\"und\"uz}
\authorblockA{Centre Tecnologic de Telecomunicacions  \\
de Catalunya (CTTC), Barcelona, Spain \\
Email: deniz.gunduz@cttc.es}

\vspace{-1.1cm}

}
\begin{document}

\maketitle
\thispagestyle{empty} \pagestyle{empty}

\begin{abstract}
%    THIS PAPER IS ELIGIBLE FOR THE STUDENT PAPER AWARD.
	This work considers reliable transmission of general correlated sources over the multiple-access relay channel (MARC)
    and the multiple-access broadcast relay channel (MABRC). In MARCs only the destination is interested in a reconstruction of the sources,
    while in MABRCs both the relay and the destination want to reconstruct the sources.
	We assume that both the relay and the destination have correlated side information.
	We find sufficient conditions for reliable communication based on operational separation, as well as necessary conditions on the achievable source-channel rate.
	For correlated sources transmitted over fading Gaussian MARCs and MABRCs we find conditions under which informational separation is optimal.	
	
\end{abstract}

\vspace{-0.1cm}

\section{Introduction}

\vspace{-0.1cm}

The multiple-access relay channel (MARC) is a network in which several users communicate with a single destination with the
help of a relay \cite{Kramer:00}.
Examples of such a network are sensor and ad-hoc networks in which an intermediate relay can be added to assist communication
from the sources to the destination.
Achievable regions for the MARC were derived in \cite{Kramer:2005}, \cite{Sankar:07} and \cite{KramerMandayam:04}.
In \cite{Kramer:2005} Kramer et al. derived an achievable rate region for the MARC with independent sources.
The coding scheme employed in \cite{Kramer:2005} is based on decode-and-forward (DF) relaying, and uses regular encoding,
successive decoding at the relay and backward decoding at the destination.
In \cite{Sankar:07}, another DF-based coding scheme for the MARC was presented.
The work \cite{Sankar:07} also showed that, in contrast to DF for the classic relay channel, for the MARC different DF
schemes yield different rate regions (backward decoding can yield larger rates than sliding window decoding).
Outer bounds on the capacity of discrete memoryless (DM) MARCs were obtained in \cite{KramerMandayam:04}.

The previous work on MARCs considered independent messages at the terminals. In contrast, in the present work here we allow arbitrary
correlation among the sources that should be transmitted to the destination in a lossless fashion.

In \cite{Shannon:48} Shannon showed that a source can be reliably transmitted over a memoryless point-to-point (PtP) channel,
if and only if its entropy is less than the channel capacity. Hence, a simple comparison of the rates of the optimal
source and channel codes for the respective source and channel suffices to conclude if reliable communication is feasible.
This is called the separation theorem.
The implication of the separation theorem is that independent design of the source and channel codes is optimal.

In \cite{ShamaiVerdu:95} Shamai and Verdu considered the availability of correlated side information at
the receiver in a PtP scenario, and showed that source-channel separation is optimal.
The availability of receiver side information enables transmitting the source reliably over a channel with a smaller
capacity compared to the capacity needed in the absence of receiver side information.

Unfortunately, optimality of source-channel separation in the Shannon sense does not generalize to multiuser networks \cite{Shannon:61}, \cite{Cover:80}, \cite{GunduzErkip:09}.
Therefore, in general the source and channel codes must be jointly designed for every particular source and channel combination.
Source-channel coding over the broadcast channel was considered by Tuncel in \cite{Tuncel:06}.
Tuncel distinguishes between two types of source-channel separation.
\emph{Informational separation} refers to classical separation in the Shannon sense. \emph{Operational separation} refers to statistically independent source and channel codes that are not necessarily the optimal codes for the underlying source or the channel.
In \cite{Tuncel:06} Tuncel showed that for a broadcast channel in which each receiver has a different side information,
operational separation is optimal, while  informational separation is not.

Optimizing source coding along with multiuser channel coding in a general setting is a very complicated task.
%In case of noise-free finite capacity links, the optimization problem reduces to a multiterminal source coding problem.
%For independent sources over correlated noisy links, one must find the capacity region of a general communication network.
In \cite{ChaoChenDiggaviShamai:Submitted} Tian et al. showed the optimality of operational separation for the following two scenarios: a) arbitrarily correlated sources over orthogonal links; b) independent sources over a general network with some restrictions on how many messages can be decoded at each destination.
In \cite{GunduzErkip:09} G\"und\"uz et al. obtained necessary and sufficient conditions for the optimality of
informational separation for the multiple-access channel with correlated sources and side information.
G\"und\"uz et al. also obtained necessary and sufficient conditions for the optimality of operational
separation for the compound multiple-access channel with correlated sources and side information.
In \cite{ErkipGunduz:07} G\"und\"uz and Erkip showed that operational separation is optimal for the cooperative
relay-broadcast channel.
Necessary and sufficient conditions for reliable transmission of a source over a relay channel when side information
is available either only at the receiver, or only at the relay or at both the relay and the
receiver were established in \cite{ElGamalCioffi:07}.

In this paper we shall also consider MARCs and MABRCs subject to independent and identically distributed (i.i.d.)
fading, for both phase and Rayleigh fading.
Phase fading models apply to high-speed microwave communications where the oscillator's phase noise and the system timing
jitter are the key impairments. Phase fading is also the major impairment in communication systems that employ orthogonal
frequency division multiplexing 
%\cite{BanNess:04}
, as well as in some applications of naval communications.
Rayleigh fading models are very common in wireless communications and apply to mobile communications in the presence
of multiple scatterers without line-of-sight.
The key similarity between the two models is the uniformly distributed phase of the fading process.
The phase fading and Rayleigh fading models differ in the behaviour of the fading magnitude component,
which is fixed for the former but varies following a Rayleigh distribution for the latter.

\vspace{-0.1cm}

\subsection*{Main Contributions}

	In this paper we establish a DF-based achievable source-channel rate for the MARC with correlated sources and with
side information at the relay and the destination.
 The scheme uses irregular encoding, successive decoding at the relay and backward decoding at the destination.
	We show that for the MARC with correlated sources and side information, irregular encoding yields a higher achievable
source-channel rate than the rate achieved by regular encoding.
	This rate also applies directly to the MABRC.	
	We then derive necessary conditions on the achievable source-channel rate of MARCs (and MABRCs) with correlated sources and side information.

	Next, we consider transmission of correlated sources over fading MARCs and MABRCs with side information at the relay and at the destination.
	Using the necessary conditions on the achievable source-channel rate of the MARC that we derive here, the capacity region of the MARC obtained in \cite[Thm. 9]{Kramer:2005}, 
and the results presented in \cite[Section III.C]{Ron:2010}, we find conditions for correlated sources transmitted over phase fading MARCs with side information, 
under which informational separation is optimal. Optimality conditions are also obtained for Rayleigh fading MARCs with correlated sources and side information. 
Additionally, we find conditions for the optimality of separation for fading MABRCs. {\em This is the first time the optimality of separation is shown for the MARC and the MABRC models}. Note that these models are not degraded in the sense of \cite{Sankar:09}, see also \cite[Remark 33]{Kramer:2005}.
	
	The rest of this paper is organized as follows: in Section \ref{sec:NotationModel}
    the model and notations are presented. In Section \ref{sec:DM_MARC} the separation based achievable source-channel rate is
    presented as well as necessary conditions on the achievable source-channel rate.
	The optimality of separation for correlated sources transmitted over fading Gaussian MARCs is studied in
    Section \ref{sec:sepOpt}. %Discussion is provided in Section \ref{sec:discussion}.

\vspace{-0.1cm}

\section{Notations and Model}
\label{sec:NotationModel}

\vspace{-0.1cm}

In the following we use $H(\cdot)$ to denote the entropy of a discrete random variable and
$I(\cdot;\cdot)$ to denote the mutual information between two
random variables, as defined in \cite[ch. 2, ch. 9]{cover-thomas:it-book}.
We denote the set of real numbers with $\setR$, and we use $\setC$ to denote the set of complex numbers.
We denote random variables with upper case letters, e.g., $X$, $Y$, and their realizations with lower case letters
$x$, $y$. A discrete random variable $X$ takes values in a set $\mX$. We use $p_X(x)$ to denote the probability mass function (p.m.f.) of a discrete RV $X$ on $\mX$, and $f_X(x)$ to denote the probability density function (p.d.f.) of a continuous RV $X$ on $\setC$.
For brevity we may omit the subscript $X$ when it is the uppercase version of the sample symbol $x$.
We use $p_{X|Y}(x|y)$ to denote the conditional distribution of $X$ given $Y$.
We use $X^j$ to denote the vector $(X_1,X_2,\dots,X_j)$.
We denote the empty set with $\phi$, and the complement of the set $B$ by $B^c$.
We use $\styp(X)$ to denote the set of $\eps$-strongly typical sequences with respect to distribution
$p_X(x)$ on $\mX$,
%When referring to a typical set we may omit the random variables from the notation, when these variables are clear from the context.
%We use
 $\mCN(a,\sigma^2)$ to denote a proper, circularly symmetric, complex
Normal distribution with mean $a$ and variance $\sigma^2$, and $\E\{ \cdot \}$ to denote stochastic expectation.

The MARC consists of two transmitters (sources), a receiver (destination) and a relay.
Transmitter $i$ has access to the source sequence $S_i^m$, for $i=1,2$.
The receiver is interested in the lossless reconstruction of the source sequences observed by the two transmitters.
The objective of the relay is to help the receiver decode the source sequences.
It is also assumed that the relay and the receiver have side information correlated with the source sequences.
For the MABRC both the receiver and the relay are interested in a lossless reconstruction of the source sequences.
Figure \ref{fig:MABRCsideInfo} depicts the MABRC with side information setup. 
%The MARC is obtained when the reconstruction at the relay is omitted.
\begin{figure}[h]
		\vspace{-0.3cm}
    \centering
    \scalebox{0.44}{\includegraphics{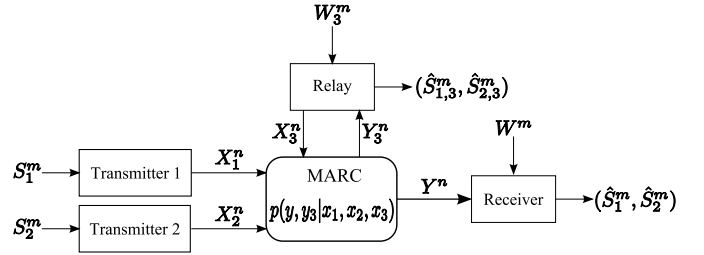}}
    \vspace{-0.44cm}
    \caption{Multiple-access broadcast relay channel with correlated side information. 
    $(\hat{S}^m_{1,3}, \hat{S}^m_{2,3})$ are the reconstructions of $(S^m_{1}, S^m_{2})$ at the relay, and $(\hat{S}^m_{1}, \hat{S}^m_{2})$ are the reconstructions at the destination
    .}
    \label{fig:MABRCsideInfo}
    \vspace{-0.3cm}
\end{figure}

%The side information sequences available at the receiver and the relay are denoted by $W^m$ and $W_3^m$, respectively.
The sources and the side information sequences, $\{ S_{1,k},S_{2,k},W_{k},W_{3,k} \}_{k=1}^{m}$, are
arbitrarily correlated according to a joint distribution $p(s_1,s_2,w,w_3)$ over a
finite alphabet $\mS_1 \times \mS_2 \times \mW \times \mW_3$, and independent across different sample indexes $k$.
All nodes know this joint distribution.

For transmission, a discrete memoryless channel with inputs $X_1, X_2, X_3$ over finite input alphabets $\mX_1,\mX_2,\mX_3$, and outputs $Y, Y_3$ over finite output alphabets $\mY,\mY_3$, is available.
%The distribution function of the MARC is given by
The channel is memoryless in the sense
\begin{equation}
	p(y_{k},y_{3,k}|y^{k-1},y_3^{k-1},x_1^k,x_2^k,x_3^k) = p(y_{k},y_{3,k}|x_{1,k},x_{2,k},x_{3,k}). \nonumber
\label{eq:MARCchanDist}
\end{equation}

An $(m, n)$ source-channel code for the MABRC with correlated side information consists of two encoding functions at the transmitters:
$f_i^{(m,n)} : \mS_i^m \mapsto \mX_i^n, i=1,2$, and two decoding functions at the destination and the relay:
$g^{(m,n)}: \mY^n \times \mW^m \mapsto \mS_1^m \times \mS_2^m, g_3^{(m,n)}: \mY_3^n \times \mW_3^m \mapsto \mS_1^m \times \mS_2^m$.
Finally, there is a causal encoding function at the relay, $x_{3,k} = f_{3,k}^{(m,n)}(y_{3,1}^{k-1},w_{3,1}^m), 1 \leq k \leq n$.
Note that in the MARC scenario the decoding function $g_3^{(m,n)}$ does not exist.
Let $\hat{S}_i^m$ denote the reconstruction of $S_i^m, i=1,2,$ respectively, at the receiver. Let $\hat{S}_{i,3}^m$
denote the reconstruction of $S_i^m, i=1,2,$ respectively, at the relay. The average probability of error, $P_e^{(m,n)}$, of an $(m,n)$
code for the MABRC is defined as
$P_e^{(m,n)} \triangleq  \Pr\big((\hat{S}_1^m,\hat{S}_2^m) \neq (S_1^m,S_2^m) \mbox{ or } (\hat{S}_{1,3}^m,\hat{S}_{2,3}^m) \neq (S_1^m,S_2^m) \big)$, while for the MARC the average probability of error is defined as $P_e^{(m,n)} \triangleq  \Pr\big((\hat{S}_1^m,\hat{S}_2^m) \neq (S_1^m,S_2^m)\big)$.
A source-channel rate $\kappa$ is said to be achievable for the MABRC, if for every $\epsilon > 0$ there exist positive integers
$n_0, m_0$ such that for all $n>n_0, m>m_0,n/m=\kappa$, there exists an $(m,n)$ code for which $P_e^{(m,n)} < \epsilon$.
The same definition applies to the MARC.

	For fading Gaussian MARCs and MABRCs, the received signals at time $k$ at the receiver and at the relay are given by (see Figure \ref{fig:FadeGaussMARC})
	
	\vspace{-0.60cm}
	
	\begin{subequations} \label{eq:FadeGaussMARC}
	\begin{eqnarray}
		Y_{k} &=& H_{11,k} X_{1,k} + H_{21,k} X_{2,k} + H_{31,k} X_{3,k} + Z_{k} \label{eq:FadeGaussMARC_dst} \\
		Y_{3,k} &=& H_{13,k} X_{1,k} + H_{23,k} X_{2,k} + Z_{3,k}, \label{eq:FadeGaussMARC_rly}
	\end{eqnarray}
	\end{subequations}	
	
	\vspace{-0.5cm}
		
	\begin{figure}[h]
    \centering
    \scalebox{0.41}{\includegraphics{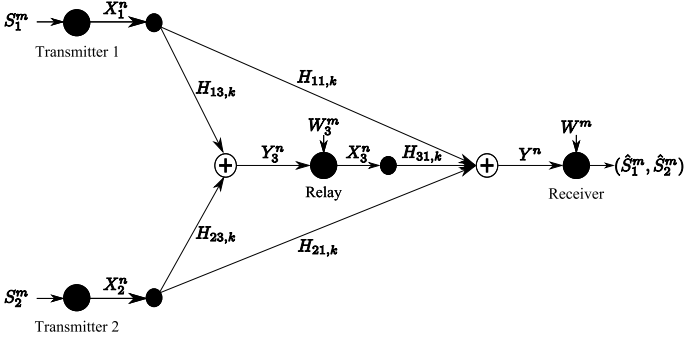}}
    \vspace{-0.6cm}
    \caption{Transmission of correlated sources over the fading Gaussian MARC with side information at the relay and at the destination (additive noises are not depicted).}
    \label{fig:FadeGaussMARC}
    \vspace{-0.3cm}
	\end{figure}	
	
	\noindent $k=1, 2, \dots,n$, where $Z$ and $Z_3$ are independent of each other,
 i.i.d., circularly symmetric, complex Normal RVs, $\mCN(0,1)$.
The channel input signals are subject to per-symbol average power constraints: $\E\{\left|X_i\right|^2 \} \leq P_i, i=1,2,3$.
In the following it is assumed that the destination knows the instantaneous channel coefficients from Transmitter $i, i=1,2$, and from the relay to itself, and the relay knows the instantaneous channel coefficients from both transmitters to itself.
This is referred to as receiver channel state information (Rx-CSI).
	For {\it phase} fading channels the channel coefficients are given by $H_{li,k} = a_{li} e^{j\Theta_{li,k}}$,
where $a_{li} \in \setR$
are constants representing the attenuation and $\Theta_{li,k}$ are uniformly distributed over $[0,2\pi)$, i.i.d., and independent of
each other and of the additive noises $Z_3$ and $Z$. For {\it Rayleigh} fading channels the channel coefficients are given by
$H_{li,k} = a_{li} U_{li,k}$ , $a_{li} \in \setR$ are constants representing the attenuation, and  $U_{li,k}$ are circularly symmetric,
complex Normal RVs, $U_{li,k} \sim \CN(0,1)$, i.i.d.,~and independent of each other and of the additive noises $Z_3$ and~$Z$.

\vspace{-0.0cm}

\section{Source-Channel Coding for Discrete Memoryless MARCs and MABRCs} \label{sec:DM_MARC}

\vspace{-0.0cm}

\subsection{Operational Separation-Based Achievable Rate}

\vspace{-0.1cm}

In this subsection we present an achievable rate for MARCs and MABRCs with correlated sources and side information, based on operational separation.

\begin{theorem}
	\thmlabel{thm:separationCond}
	For DM MARCs and DM MABRCs with relay and receiver side information as defined in Section \ref{sec:NotationModel}, source-channel rate $\kappa$ is achievable if,
	\vspace{-0.15cm}
	\begin{subequations} \label{bnd:sepBased}
	\begin{eqnarray}
		H(S_1|S_2,W_3) &<& \kappa I(X_1;Y_3|X_2, V_1,  X_3) \label{bnd:rly_S1} \\
		H(S_2|S_1,W_3) &<& \kappa I(X_2;Y_3|X_1,  V_2, X_3) \label{bnd:rly_S2} \\
		H(S_1,S_2|W_3) &<& \kappa I(X_1,X_2;Y_3|V_1, V_2, X_3) \label{bnd:rly_S1S2}\\
		H(S_1|S_2,W) &<& \kappa I(X_1,X_3;Y|X_2, V_2) \label{bnd:dst_S1} \\	
		H(S_2|S_1,W) &<& \kappa I(X_2,X_3;Y|X_1, V_1) \label{bnd:dst_S2} \\	
		H(S_1,S_2|W) &<& \kappa I(X_1,X_2,X_3;Y), \label{bnd:dst_S1S2}
	\end{eqnarray}
	\end{subequations}
	
	\vspace{-0.1cm}
	
	\noindent for an input distribution that factors as
	\vspace{-0.2cm}
	\begin{equation*}
		p(s_1,s_2,w_3,w)p(v_1)p(x_1|v_1)p(v_2)p(x_2|v_2)p(x_3|v_1,v_2).
	\label{eq:SepJointDist}
	\end{equation*}
	
	\vspace{-0.25cm}

\end{theorem}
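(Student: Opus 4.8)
The plan is to prove achievability by \emph{operational separation}: compress each source with a Slepian--Wolf-type random binning, and convey the bin indices together with the source descriptions over the channel using a block-Markov decode-and-forward code with \emph{irregular} encoding (the relay's cooperation codebooks carry fewer bits than the users' source codebooks), successive (forward, block-by-block) decoding at the relay, and backward decoding at the destination. Concretely, I would transmit $B$ source blocks of length $m$ over $B+1$ channel blocks of length $n$, which costs a rate factor $(B+1)/B\to1$. Fix an input distribution as in the statement. For $i=1,2$, partition $\mS_i^m$ into $2^{mR_i}$ bins with bin index $\nu_i(\cdot)$, the rates $R_1,R_2$ to be chosen at the end. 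Generate cooperation codewords $v_i(t_i)$, $t_i\in\{1,\dots,2^{mR_i}\}$, i.i.d.\ from $p_{V_i}$; for each $t_i$ and each $s_i\in\mS_i^m$ generate $x_i(s_i,t_i)$ conditionally i.i.d.\ given $v_i(t_i)$; and for each $(t_1,t_2)$ generate the relay codeword $x_3(t_1,t_2)$ conditionally i.i.d.\ given $(v_1(t_1),v_2(t_2))$. In block $b$, user $i$ transmits $x_i(s_{i,b},t_{i,b})$ with $t_{i,b}=\nu_i(s_{i,b-1})$, and the relay --- which by the end of block $b-1$ has reconstructed $(s_{1,b-1},s_{2,b-1})$ and hence knows $(t_{1,b},t_{2,b})$ --- transmits $x_3(t_{1,b},t_{2,b})$.

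The relay decodes forward, block by block. Assuming the earlier blocks were decoded correctly, in block $b$ the relay knows $v_1(t_{1,b}),v_2(t_{2,b}),x_3(t_{1,b},t_{2,b})$ and looks for the unique pair $(\hat s_{1,b},\hat s_{2,b})$ that is jointly typical with $w_{3,b}$ and whose codewords are jointly typical with $y_{3,b}$ given the known signals. For the three error events (only $s_1$ wrong, only $s_2$ wrong, both wrong) the number of competing sequences is $2^{mH(S_1|S_2,W_3)}$, $2^{mH(S_2|S_1,W_3)}$, $2^{mH(S_1,S_2|W_3)}$, and the codeword-typicality probabilities simplify --- using the Markov relations $Y_3-(X_2,X_3,V_1)-V_2$ and $Y_3-(X_1,X_3,V_2)-V_1$, which follow from the factorization of the input distribution and let one drop $V_2$ (resp.\ $V_1$) from the conditioning --- to $2^{-nI(X_1;Y_3|X_2,V_1,X_3)}$, $2^{-nI(X_2;Y_3|X_1,V_2,X_3)}$, $2^{-nI(X_1,X_2;Y_3|V_1,V_2,X_3)}$, with $n=\kappa m$. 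This yields exactly \eqref{bnd:rly_S1}--\eqref{bnd:rly_S1S2}. Since the relay fully reconstructs $(s_{1,b},s_{2,b})$ in every block, the same argument gives achievability for the MABRC.

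The destination decodes backward. When it processes block $b$ it has already recovered $(s_{1,b},s_{2,b})$ --- equivalently the bins $t_{1,b+1}=\nu_1(s_{1,b})$, $t_{2,b+1}=\nu_2(s_{2,b})$ --- while processing block $b+1$, but does not yet know $(t_{1,b},t_{2,b})$; it looks for the unique $(\hat s_{1,b},\hat s_{2,b},\hat t_{1,b},\hat t_{2,b})$ with $\nu_i(\hat s_{i,b})=t_{i,b+1}$, with $(\hat s_{1,b},\hat s_{2,b},w_b)$ jointly typical, and with $(v_1(\hat t_{1,b}),v_2(\hat t_{2,b}),x_1(\hat s_{1,b},\hat t_{1,b}),x_2(\hat s_{2,b},\hat t_{2,b}),x_3(\hat t_{1,b},\hat t_{2,b}),y_b)$ jointly typical. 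In the dominant error events, in which $s_{i,b}$ is wrong (but in the known bin) \emph{and} $t_{i,b}$ is wrong, the bin size $2^{-mR_i}$ cancels the $2^{mR_i}$ choices of $\hat t_{i,b}$, so the number of competing tuples is $2^{mH(S_1|S_2,W)}$, $2^{mH(S_2|S_1,W)}$, $2^{mH(S_1,S_2|W)}$; multiplying by the typicality probability of the fresh codewords --- which reduces to $2^{-nI(X_1,X_3;Y|X_2,V_2)}$, $2^{-nI(X_2,X_3;Y|X_1,V_1)}$, $2^{-nI(X_1,X_2,X_3;Y)}$, since $Y$ depends on the $V$'s only through the $X$'s --- gives precisely \eqref{bnd:dst_S1}--\eqref{bnd:dst_S1S2}. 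The remaining "mixed" error events only require $\big(H(S_i|S_j,W)-R_i\big)^+<\kappa I(X_i;Y|X_j,V_1,V_2,X_3)$, $R_i<\kappa I(X_i,X_3;Y|X_j,V_j)$, and joint variants ($i\neq j$); because \eqref{bnd:dst_S1}--\eqref{bnd:dst_S1S2} hold strictly and $H(S_1|S_2,W)+H(S_2|S_1,W)\le H(S_1,S_2|W)$, a consistent choice of $(R_1,R_2)$ exists. Letting $m,n\to\infty$ with $n/m=\kappa$ and then $B\to\infty$ drives the total error probability to zero.

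The step I expect to be the main obstacle is the backward-decoding error analysis: one must enumerate \emph{all} joint error events over the four unknowns $(s_{1,b},s_{2,b},t_{1,b},t_{2,b})$ in each block, verify that the binning rates cancel so that the three dominant events reproduce \eqref{bnd:dst_S1}--\eqref{bnd:dst_S1S2} exactly (rather than weaker bounds still carrying $R_i$), and check that the constraints from the mixed events, together with those from the relay, leave a nonempty range for $(R_1,R_2)$. The mutual-information reductions that turn the raw error exponents into the clean right-hand sides of \eqref{bnd:sepBased} also rely on the Markov structure of the chosen factorization and must be recorded carefully.
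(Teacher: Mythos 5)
Your proposal is correct in outline and would establish the theorem, but it takes a genuinely different route from the paper. The paper uses \emph{two} independent Slepian--Wolf binnings of each source (relay bins at rate $R_i^r$ and destination bins at rate $R_i^d$ --- this is exactly what ``irregular encoding'' means here), and the channel code carries \emph{only bin indices}: $x_i(\msg_{i,b}^r,\msg_{i,b-1}^d)$ with $v_i(\cdot)$ indexed by the destination bin and $x_3(\cdot,\cdot)$ by the pair of destination bins. The relay channel decoder recovers $\msg_{i,b}^r$ (giving the constraints $\chR_i^r<I(X_1;Y_3|X_2,V_1,X_3)$, etc.), a stand-alone Slepian--Wolf decoder then recovers $\svec_{i,b}$ from the relay bins and $\wvec_{3,b}$ (giving $H(S_1|S_2,W_3)<R_1^r$, etc.), and the analogous two-stage structure is used at the destination with backward decoding; combining the two layers via $\kappa\chR_i=R_i$ yields \eqref{bnd:sepBased} with no Fourier--Motzkin step and with the channel-code error analysis imported wholesale from the existing MARC DF analysis. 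You instead index the fresh codewords directly by the source sequences, keep a single binning for the cooperative information, and decode at both nodes by joint typicality that intersects source-typical candidates (given the side information) with channel-typical codewords. This buys a more economical code construction and makes the cancellation $2^{mH(\cdot)}\cdot2^{-nI(\cdot)}$ appear in one shot, but it pushes all the difficulty into the destination's enumeration of error events over the four unknowns $(\hat s_{1,b},\hat s_{2,b},\hat t_{1,b},\hat t_{2,b})$ and the existence of consistent $(R_1,R_2)$ --- precisely the step you flag --- whereas the paper's modular two-layer analysis avoids it entirely. One further caveat: because your destination decoder uses the source statistics and the side information inside the channel-decoding step, your scheme is a joint source--channel decoding scheme rather than an operational-separation scheme; this is immaterial for the achievability claim of the theorem itself, but it would not support the paper's subsequent remarks that the only source/channel interface is the bin indices.
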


\begin{proof}[$\mspace{-46mu}$ Proof outline]
The achievability is established by using two independent Slepian-Wolf source coding schemes
\cite[Section 14.4]{cover-thomas:it-book}, and a channel coding scheme similar to the one detailed in
\cite[Sections II, III]{Sankar:07}. The channel coding scheme employs a DF code  with irregular
block Markov encoding, successive decoding at the relay,
and backward decoding at the destination. 
Detailed proof is provided in \cite{Murin:IT11}.
%  For review purpose see the detailed proof in Appendix \ref{annex:sepProof}.
\end{proof}

\vspace{-0.1cm}
\subsection{Discussion} \label{subsec:DM_MARC_Discussion}
\vspace{-0.1cm}

%\begin{remark}
	In \Thmref{thm:separationCond}, bounds \eqref{bnd:rly_S1}--\eqref{bnd:rly_S1S2} are constraints for decoding at the relay,
while bounds \eqref{bnd:dst_S1}--\eqref{bnd:dst_S1S2} are  constraints for decoding at the destination.
%\end{remark}
%\begin{remark}\label{rem:regIrregDiff}
	The source-channel achievable rate of \Thmref{thm:separationCond} is established by using two different Slepian-Wolf coding schemes: one for the relay and one for the destination. This requires an irregular encoding scheme for the channel code.
	In regular encoding, the codebooks at the source and the relay have the same size, see for example \cite{Sankar:07}. 
    Applying regular encoding to MABRCs with correlated sources and side information leads to merging some of the constraints in \eqref{bnd:sepBased}. 
    In particular \eqref{bnd:rly_S1} and \eqref{bnd:dst_S1} will be combined into the constraint
	
	\vspace{-0.6cm}
	
	\begin{align}
		&\max \big\{ H(S_1|S_2,W_3), H(S_1|S_2,W) \big\} < \nonumber \\
		& \quad \kappa \min \big\{I(X_1;Y_3|X_2, V_1, X_3), I(X_1,X_3;Y|X_2, V_2)\big\}. \nonumber
	\end{align}	
	
	\vspace{-0.1cm}

	For irregular encoding, bounds \eqref{bnd:rly_S1} and \eqref{bnd:dst_S1} need not be combined,
since the transmission rates to the relay and to the destination can be different due to different quality of the side information.
	We conclude that for MABRCs with correlated sources and side information, irregular encoding yields a higher
source-channel achievable rate than the one achieved by regular encoding.
	When the relay and destination have the same side information ($W\!=\!W_3$) then the irregular and regular schemes obtain the
same achievable source-channel rates.

We note that  when using {\em regular encoding} for MARCs, there is a single Slepian-Wolf code, hence, in the scheme used in Thm. 1 it is not required
to recover the source sequences at the relay and the right-hand side (RHS) of the constraints \eqref{bnd:rly_S1}--\eqref{bnd:dst_S1S2} can be combined. For example,
\eqref{bnd:rly_S1} and \eqref{bnd:dst_S1} will be combined into the constraint

	\vspace{-0.6cm}
	
	\begin{align}
		 H(S_1|S_2,W) <  \kappa \min \big\{ & I(X_1;Y_3|X_2, V_1, X_3),\nonumber\\
        & \qquad I(X_1,X_3;Y|X_2, V_2)\big\}. \nonumber
	\end{align}	

%\end{remark}

%%%%%%%%%%%%%%%%%%%%%%%%%%%%%%%%%%%%%%%%%%%%%%%%%%%%%%%%%%%%%%%%%%%%%%%%%%%%%%%%%%%%%%%%%%%%%%%%%%%%%%%%%%%%%%%%%%%%%%%%%%%%%%
%%%%%%%%%%%%%%%%%%%%%%%%%%%%%%%%%%%%%%%%%%%%%%%%%%%%%%%%%%%%%%%%%%%%%%%%%%%%%%%%%%%%%%%%%%%%%%%%%%%%%%%%%%%%%%%%%%%%%%%%%%%%%%

\vspace{-0.3cm}

\subsection{Necessary Conditions on the Achievable Source-Channel Rate} \label{sec:outerBounds}

\vspace{-0.15cm}

In this subsection we present necessary conditions on the achievable source-channel rate for MARCs and for MABRCs with correlated
sources and side information at the relay and at the destination.

\vspace{0.1cm}

\begin{proposition}
	\label{thm:OuterGeneral}
	Consider the transmission of arbitrarily correlated sources $S_1$ and $S_2$ over the DM
MARC with relay side information $W_3$ and receiver side information $W$.	
	Any achievable source-channel rate $\kappa$ must satisfy the  constraints:

\vspace{-0.6cm}

\begin{subequations} \label{bnd:outr_general_dst}
\begin{eqnarray}
		H(S_1|S_2,W) &\leq& \kappa I(X_1,X_3;Y|X_2) \label{bnd:outr_general_dst_S1} \\	
		H(S_2|S_1,W) &\leq& \kappa I(X_2,X_3;Y|X_1) \label{bnd:outr_general_dst_S2} \\	
		H(S_1,S_2|W) &\leq& \kappa I(X_1,X_2,X_3;Y) \label{bnd:outr_general_dst_S1S2},
\label{eq:outerGeneral}
\end{eqnarray}
\end{subequations}

\vspace{-0.25cm}

	\noindent for some input distribution $p(x_1,x_2,x_3)$.
\end{proposition}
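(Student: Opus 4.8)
The plan is to run a Fano-plus-single-letterization converse, treating the relay as an auxiliary transmitter whose codeword enters the bound only through the induced single-letter input distribution. Fix an achievable $\kappa$ and a sequence of $(m,n)$ codes with $n/m=\kappa$ and $\Pe\to 0$. Since $(S_{1,k},S_{2,k},W_k,W_{3,k})$ are i.i.d.\ in $k$, $H(S_1^m,S_2^m|W^m)=mH(S_1,S_2|W)$ and $H(S_1^m|S_2^m,W^m)=mH(S_1|S_2,W)$, and likewise with the roles of $S_1$ and $S_2$ interchanged. For \eqref{bnd:outr_general_dst_S1S2} write $mH(S_1,S_2|W)=I(S_1^m,S_2^m;Y^n|W^m)+H(S_1^m,S_2^m|Y^n,W^m)$, and use Fano's inequality together with $(\hat S_1^m,\hat S_2^m)=g^{(m,n)}(Y^n,W^m)$ and $\Pr\big((\hat S_1^m,\hat S_2^m)\neq(S_1^m,S_2^m)\big)=\Pe$ to bound $H(S_1^m,S_2^m|Y^n,W^m)\le m\eps_m$ with $\eps_m=\frac1m+\Pe\log|\mS_1||\mS_2|\to 0$. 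For \eqref{bnd:outr_general_dst_S1} first reveal $S_2^m$ to the destination, so $mH(S_1|S_2,W)=I(S_1^m;Y^n|S_2^m,W^m)+H(S_1^m|Y^n,S_2^m,W^m)$, and again Fano kills the residual term since $\hat S_1^m$ is a function of $(Y^n,W^m)$; \eqref{bnd:outr_general_dst_S2} is symmetric.

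Next I single-letterize the leading mutual-information terms. For \eqref{bnd:outr_general_dst_S1}, using that $X_2^n=f_2^{(m,n)}(S_2^m)$ is deterministic, $I(S_1^m;Y^n|S_2^m,W^m)=I(S_1^m;Y^n|S_2^m,W^m,X_2^n)\le I(S_1^m,S_2^m,W^m;Y^n|X_2^n)=\sum_{k=1}^{n}\big[H(Y_k|X_2^n,Y^{k-1})-H(Y_k|X_2^n,Y^{k-1},S_1^m,S_2^m,W^m)\big]$. Bound the first term by $H(Y_k|X_{2,k})$ (conditioning reduces entropy), and the second from below by $H(Y_k|X_2^n,Y^{k-1},S_1^m,S_2^m,W^m,X_{1,k},X_{3,k})$; the latter equals $H(Y_k|X_{1,k},X_{2,k},X_{3,k})$ because, conditioned on the time-$k$ channel inputs $(X_{1,k},X_{2,k},X_{3,k})$, the output $Y_k$ is a function of the fresh time-$k$ noise alone and is therefore independent of all sources, side information, and past channel outputs. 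This gives $I(S_1^m;Y^n|S_2^m,W^m)\le\sum_k I(X_{1,k},X_{3,k};Y_k|X_{2,k})$; the bound \eqref{bnd:outr_general_dst_S1S2} is identical with all source/side-information variables retained and no genie, yielding $\sum_k I(X_{1,k},X_{2,k},X_{3,k};Y_k)$. The step that needs care is precisely this insertion of the relay symbol: one may \emph{not} bring in the whole relay codeword $X_3^n$ the way one does with the source codewords $X_1^n,X_2^n$, because the future relay symbols $X_{3,k+1},\dots,X_{3,n}$ depend on $Y_{3,k}$, which is correlated with $Y_k$; only the single current symbol $X_{3,k}$ may be inserted, via ``conditioning reduces entropy,'' before the memoryless property collapses the term.

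Finally I introduce a time-sharing variable $Q$ uniform on $\{1,\dots,n\}$ and independent of all sources, side information, and noises, and let $p^{(m)}(x_1,x_2,x_3)=\frac1n\sum_k p_{X_{1,k},X_{2,k},X_{3,k}}$ be the law of $(X_{1,Q},X_{2,Q},X_{3,Q})$; writing $X_i$ for $X_{i,Q}$ and $Y$ for $Y_Q$, we have $\frac1n\sum_k I(X_{1,k},X_{3,k};Y_k|X_{2,k})=I(X_1,X_3;Y|X_2,Q)$, and since the channel law does not depend on $k$ we have $Y\independent Q\,|\,(X_1,X_2,X_3)$, whence $I(X_1,X_3;Y|X_2,Q)\le I(X_1,X_3;Y|X_2)$ (and likewise for the other two). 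Combining with the Fano steps and dividing by $m$, the three inequalities of \eqref{bnd:outr_general_dst} hold for $p^{(m)}$ up to additive $\eps_m\to 0$. Passing to a subsequence along which $p^{(m)}$ converges on the (compact) simplex over $\mX_1\times\mX_2\times\mX_3$ and invoking continuity of mutual information produces a single limiting $p(x_1,x_2,x_3)$ satisfying \eqref{bnd:outr_general_dst}. No independence structure on $p(x_1,x_2,x_3)$ is obtained, consistent with the statement, since the sources are correlated.
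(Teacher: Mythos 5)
Your proof is correct and follows essentially the same route as the paper's: Fano's inequality at the destination, single-letterization of $I(S_1^m;Y^n|S_2^m,W^m)$ and its two companions via the memoryless-channel collapse $H(Y_k|\,\cdots,X_{1,k},X_{2,k},X_{3,k})=H(Y_k|X_{1,k},X_{2,k},X_{3,k})$, and an averaging step yielding a single $p(x_1,x_2,x_3)$. The only differences are presentational: the paper runs the chain in the reverse direction (lower-bounding $\sum_{k} I(X_{1,k},X_{3,k};Y_k|X_{2,k})$ by $mH(S_1|S_2,W)-m\delta(P_e^{(m,n)})$) and invokes concavity of the mutual-information functionals in $p(x_1,x_2,x_3)$ where you use a time-sharing variable $Q$, and your caveat about inserting only the current relay symbol $X_{3,k}$ rather than the whole codeword $X_3^n$ is exactly the care the paper takes by conditioning only on $X_{3,1}^{i}$.
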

\vspace{-0.1cm}
\begin{proof}[$\mspace{-42mu}$ Proof]
A detailed proof is provided in \cite{Murin:IT11}.
%For review purpose see detailed proof in Appendix \ref{annex:DstConstraintsProofGeneral}.
\end{proof}

\vspace{-0.1cm}
\begin{remark}
	Setting $\mX_2=\mS_2 = \phi$, Proposition \ref{thm:OuterGeneral} specializes to the converse of \cite[Thm. 3.1]{ErkipGunduz:07} for the relay channel.
\end{remark}
\vspace{-0.1cm}

\begin{proposition}
	\label{thm:OuterGeneralMABRC}
	
	Consider the transmission of arbitrarily correlated sources $S_1$ and $S_2$ over the DM MABRC with relay side information $W_3$ and receiver side information $W$.	
	Any achievable source-channel rate $\kappa$ must the satisfy the constraints in \eqref{bnd:outr_general_dst} as well as the following constraints:

\vspace{-0.6cm}

\begin{subequations} \label{bnd:outr_general_rly}
	\begin{eqnarray}
		H(S_1|S_2,W_3) &\leq& \kappa I(X_1;Y_3|X_2, X_3) \label{bnd:outr_general_rly_S1} \\
		H(S_2|S_1,W_3) &\leq& \kappa I(X_2;Y_3|X_1, X_3) \label{bnd:outr_general_rly_S2} \\
		H(S_1,S_2|W_3) &\leq& \kappa I(X_1,X_2;Y_3| X_3), \label{bnd:outr_general_rly_S1S2}
\label{eq:outerRelayGeneral}
\end{eqnarray}	
\end{subequations}
\vspace{-0.65cm}

	\noindent for some input distribution $p(x_1,x_2,x_3)$.

\vspace{-0.1cm}

\end{proposition}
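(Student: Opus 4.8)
The constraints \eqref{bnd:outr_general_dst} are inherited from Proposition \ref{thm:OuterGeneral}: since the MABRC error event contains the destination error event, any $(m,n)$ code that is good for the MABRC is also good for the corresponding MARC, and hence any source-channel rate achievable over the MABRC satisfies \eqref{bnd:outr_general_dst}. It remains to establish the relay constraints \eqref{bnd:outr_general_rly}, and the plan is to run the same Fano-plus-single-letterization converse used for Proposition \ref{thm:OuterGeneral}, but now anchored at the relay, whose decoder $g_3^{(m,n)}$ exists in the MABRC.

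Fix an achievable rate $\kappa$ and a sequence of $(m,n)$ codes with $n/m=\kappa$ and $P_e^{(m,n)}\to 0$. Because the relay reconstructs $(S_1^m,S_2^m)$ from $(Y_3^n,W_3^m)$, Fano's inequality gives $H(S_1^m,S_2^m\,|\,Y_3^n,W_3^m)\le m\delta_m$ with $\delta_m\to 0$. For \eqref{bnd:outr_general_rly_S1} I would start from the identity $m\,H(S_1|S_2,W_3)=H(S_1^m|S_2^m,W_3^m)$ (valid since $\{(S_{1,k},S_{2,k},W_{3,k})\}$ are i.i.d.), split it as $I(S_1^m;Y_3^n|S_2^m,W_3^m)+H(S_1^m|Y_3^n,S_2^m,W_3^m)$, bound the second term by $H(S_1^m,S_2^m|Y_3^n,W_3^m)\le m\delta_m$, and expand the first term by the chain rule over the $n$ channel uses. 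The code structure enters here: $X_{1,k}$ is a deterministic function of $S_1^m$, $X_{2,k}$ of $S_2^m$, and, crucially, $X_{3,k}$ of $(Y_3^{k-1},W_3^m)$ by the causal relay encoding, so conditioning on $(S_1^m,S_2^m,W_3^m,Y_3^{k-1})$ fixes $(X_1^k,X_2^k,X_3^k)$; the memoryless-channel property then collapses $H(Y_{3,k}|S_1^m,S_2^m,W_3^m,Y_3^{k-1})$ to $H(Y_{3,k}|X_{1,k},X_{2,k},X_{3,k})$, while $H(Y_{3,k}|S_2^m,W_3^m,Y_3^{k-1})\le H(Y_{3,k}|X_{2,k},X_{3,k})$ because $(X_{2,k},X_{3,k})$ is a function of $(S_2^m,W_3^m,Y_3^{k-1})$ and conditioning reduces entropy. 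Note that conditioning on $X_{3,k}$ is ``free'' here, as $X_{3,k}$ is already determined by the conditioning variables --- this is exactly why the relay's own input $X_3$ appears as a conditioning variable in \eqref{bnd:outr_general_rly} rather than as a co-input as in \eqref{bnd:outr_general_dst}. This yields $I(S_1^m;Y_3^n|S_2^m,W_3^m)\le\sum_{k=1}^n I(X_{1,k};Y_{3,k}|X_{2,k},X_{3,k})$. Finally I would introduce a time-sharing variable $Q$, uniform on $\{1,\dots,n\}$ and independent of all sources and noises, set $X_i:=X_{i,Q}$ and $Y_3:=Y_{3,Q}$, and invoke the Markov chain $Q-(X_{1,Q},X_{2,Q},X_{3,Q})-Y_{3,Q}$ (again from the memoryless channel) to remove $Q$, arriving at $\tfrac{1}{n}\sum_{k=1}^n I(X_{1,k};Y_{3,k}|X_{2,k},X_{3,k})\le I(X_1;Y_3|X_2,X_3)$ for the induced input distribution $p(x_1,x_2,x_3)$. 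Dividing by $m$ and letting $m\to\infty$ gives \eqref{bnd:outr_general_rly_S1}; starting instead from $H(S_2^m|S_1^m,W_3^m)$ and $H(S_1^m,S_2^m|W_3^m)$ produces \eqref{bnd:outr_general_rly_S2} and \eqref{bnd:outr_general_rly_S1S2} through the identical sequence of steps, with the same $Q$.

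The only delicate point is the bookkeeping around the causal relay encoder: one must verify at each step that conditioning on $(Y_3^{k-1},W_3^m)$ genuinely determines $X_3^k$ so the memoryless-channel substitution is legitimate, and that a single joint input law $p(x_1,x_2,x_3)$ --- the distribution of $(X_{1,Q},X_{2,Q},X_{3,Q})$ --- satisfies all three bounds simultaneously. I expect this to be routine rather than a genuine obstacle. A detailed proof is provided in \cite{Murin:IT11}.
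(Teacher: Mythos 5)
Your proposal is correct and follows essentially the same route as the paper's proof: Fano's inequality anchored at the relay decoder $g_3^{(m,n)}$, the chain rule over channel uses combined with the memoryless-channel Markov relation and the functional dependencies $X_{2,k}=f(S_2^m)$ and $X_{3,k}=f(Y_3^{k-1},W_3^m)$ (which is exactly why $X_3$ appears as a conditioning variable), and then single-letterization. The only cosmetic differences are that you run the inequality chain from the source side toward the channel side rather than the reverse, and you single-letterize with a time-sharing variable $Q$ where the paper invokes concavity of the mutual information in $p(x_1,x_2,x_3)$; these are equivalent.
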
	

\begin{proof}[$\mspace{-42mu}$ Proof]
A detailed proof is provided in \cite{Murin:IT11}.
%For review purpose see detailed proof in Appendix \ref{annex:RelayConstraintsProofGeneral}.
\end{proof}

\vspace{-0.10cm}

\section{Optimality of Source-Channel Separation for Fading Gaussian MARCs and MABRCs}
\label{sec:sepOpt}

\vspace{-0.10cm}

We begin by considering source-channel separation for phase fading Gaussian MARCs \eqref{eq:FadeGaussMARC}. The result is stated in the following theorem.
\vspace{-0.1cm}
	\begin{theorem} \thmlabel{thm:PhaseGaussMARC}
		Consider the transmission of arbitrarily correlated sources $S_1$ and $S_2$ over a phase fading Gaussian
MARC~with receiver side information $W$
 and relay side information $W_3$. 
 Let the channel inputs be subject~to
per-symbol power constraints $\E\{\left|X_i\right|^2 \} \leq P_i, i=1,2,3$, and let the channel coefficients and the channel input powers satisfy
	
	\vspace{-0.60cm}
		
	\begin{subequations} \label{eq:phaseRelayDecConstraints}
	\begin{eqnarray}
		a_{11}^2 P_1 + a_{31}^2 P_3 &\leq& a_{13}^2 P_1 \label{eq:phaseRelayDecConstraints_R1} \\
		a_{21}^2 P_2 + a_{31}^2 P_3 &\leq& a_{23}^2 P_2 \label{eq:phaseRelayDecConstraints_R2} \\
		a_{11}^2 P_1 + a_{21}^2 P_2 + a_{31}^2 P_3 &\leq& a_{13}^2 P_1 + a_{23}^2 P_2.
	\label{eq:phaseRelayDecConstraints_R1R2}
	\end{eqnarray}
	\end{subequations}
	
	\vspace{-0.15cm}

	\noindent A source-channel rate $\kappa$ is achievable if
	
	\vspace{-0.60cm}
	
	\begin{subequations} \label{eq:phaseAchievConstraints}
	\begin{eqnarray}
		H(S_1|S_2,W) &<& \kappa \log_2(1 + a_{11}^2 P_1 + a_{31}^2 P_3) \label{eq:phaseAchievConstraints_S1}	\\
		H(S_2|S_1,W) &<& \kappa \log_2(1 + a_{21}^2 P_2 + a_{31}^2 P_3) \label{eq:phaseAchievConstraints_S2}	\\
		H(S_1,S_2|W) &<& \kappa \log_2(1 + a_{11}^2 P_1 + a_{21}^2 P_2 + a_{31}^2 P_3). \label{eq:phaseAchievConstraints_S1S2}
	\end{eqnarray}
	\end{subequations}

	\vspace{-0.10cm}
		
	\noindent 
%	and it is achieved by using $X_i \sim \mCN(0,P_i)$, $i\in\{1,2,3\}$, 
%  independent of each other, and applying DF at the relay.
		Conversely, if source-channel rate $\kappa$ is achievable, then conditions \eqref{eq:phaseAchievConstraints} are
satisfied with $<$ replaced by $\leq$.
	\end{theorem}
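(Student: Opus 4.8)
The plan is to establish the two directions separately: achievability by an informational-separation scheme, and the converse directly from the necessary conditions of Proposition~\ref{thm:OuterGeneral}. The common mechanism in both directions is that under phase fading the independent, uniformly distributed phases force every cross-correlation between distinct transmitted signals to average to zero, so that the relevant capacity-type quantities collapse to the sum of the individual received signal-to-noise ratios appearing in \eqref{eq:phaseAchievConstraints}; the hypotheses \eqref{eq:phaseRelayDecConstraints} are what make the achievability and converse bounds coincide.

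For achievability I would concatenate two independent Slepian--Wolf source codes, using the destination side information $W^m$ as decoder side information, with a decode-and-forward channel code attaining the capacity region of the phase fading Gaussian MARC. The two Slepian--Wolf bin indices are asymptotically uniform and mutually independent, hence admissible as the two messages of the channel code. Evaluating the DF region of \cite[Thm.~9]{Kramer:2005} for i.i.d.\ Gaussian inputs $X_i\sim\mCN(0,P_i)$ via \cite[Section~III.C]{Ron:2010}, and using that for phase fading the magnitudes $|H_{li}|=a_{li}$ are deterministic, the relay-decoding constraints become $R_i<\log_2(1+a_{i3}^2 P_i)$ and $R_1+R_2<\log_2(1+a_{13}^2P_1+a_{23}^2P_2)$, which by \eqref{eq:phaseRelayDecConstraints} are subsumed by the destination constraints $R_1<\log_2(1+a_{11}^2P_1+a_{31}^2P_3)$, $R_2<\log_2(1+a_{21}^2P_2+a_{31}^2P_3)$ and $R_1+R_2<\log_2(1+a_{11}^2P_1+a_{21}^2P_2+a_{31}^2P_3)$; thus the DF capacity region equals this destination region. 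With $n=\kappa m$ the destination recovers the bin indices whenever the per-source Slepian--Wolf rates divided by $\kappa$ lie in this region, and matching the polymatroidal Slepian--Wolf region against the scaled capacity region — using $\log_2(1+a_{11}^2P_1+a_{21}^2P_2+a_{31}^2P_3)\le\log_2(1+a_{11}^2P_1+a_{31}^2P_3)+\log_2(1+a_{21}^2P_2+a_{31}^2P_3)$ — a feasible rate allocation exists precisely when \eqref{eq:phaseAchievConstraints} holds. In this scheme the relay only decodes channel codewords and never reconstructs the sources, which is why $W_3$ does not appear.

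For the converse (which needs no assumption on the $a_{li}$) I would start from Proposition~\ref{thm:OuterGeneral}, applied to the power-constrained fading channel with the receiver's observation taken to be the pair $(Y,\mathbf H)$, where $\mathbf H$ collects the channel coefficients known at the receiver: there is an input law with $\E|X_i|^2\le P_i$ satisfying \eqref{bnd:outr_general_dst} with $Y$ replaced by $(Y,\mathbf H)$. Since $\mathbf H$ is independent of $(X_1,X_2,X_3,Z)$, the first bound reduces to $H(S_1|S_2,W)\le\kappa(h(Y|X_2,\mathbf H)-\log_2(\pi e))$. Conditioned on $X_2$ and $\mathbf H$, the output is a scalar complex random variable, so the circularly symmetric complex Gaussian maximizes its conditional entropy; bounding the conditional variance by the second moment about $H_{21}X_2$, then applying Jensen's inequality to $\log$ and using $\E|H_{li}|^2=a_{li}^2$ together with $\E[H_{11}\overline{H_{31}}]=a_{11}a_{31}\,\E[e^{j\Theta_{11}}]\,\E[e^{-j\Theta_{31}}]=0$ (the phases being independent and uniform on $[0,2\pi)$), the cross term vanishes and one gets $I(X_1,X_3;Y|X_2)\le\log_2(1+a_{11}^2P_1+a_{31}^2P_3)$. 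Together with \eqref{bnd:outr_general_dst_S1} this yields \eqref{eq:phaseAchievConstraints_S1} with $<$ relaxed to $\le$; \eqref{eq:phaseAchievConstraints_S2} and \eqref{eq:phaseAchievConstraints_S1S2} follow in the same way from \eqref{bnd:outr_general_dst_S2} and \eqref{bnd:outr_general_dst_S1S2}, using $\E[H_{21}\overline{H_{31}}]=0$, respectively all of $\E[H_{11}\overline{H_{21}}]=\E[H_{11}\overline{H_{31}}]=\E[H_{21}\overline{H_{31}}]=0$.

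The main obstacle I expect is making the two sides coincide rather than any single calculation. On the achievability side one must verify that \eqref{eq:phaseRelayDecConstraints} is exactly the condition under which the DF region of \cite[Thm.~9]{Kramer:2005} specialized to phase fading loses its relay-decoding constraints, so that DF is capacity-achieving and meets the converse — this is where the evaluation in \cite[Section~III.C]{Ron:2010} is essential. On the converse side, the delicate point is justifying the use of the discrete-memoryless necessary conditions of Proposition~\ref{thm:OuterGeneral} for the power-constrained Gaussian channel with receiver CSI — in particular treating $(Y,\mathbf H)$ as the channel output, carrying the per-symbol power constraint through the single-letterization, and handling that the inputs $X_1,X_2,X_3$ may themselves be correlated (which does not affect the argument, since the phase expectations factor out before the input second moments are invoked). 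The phase-averaging step that annihilates the cross terms, though the conceptual heart, is then a one-line computation.
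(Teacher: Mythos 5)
Your proposal is correct and follows essentially the same route as the paper: achievability via two Slepian--Wolf codes concatenated with the DF channel code of \cite[Section III.C]{Ron:2010}, with \eqref{eq:phaseRelayDecConstraints} guaranteeing that the relay-decoding constraints are inactive so that the destination region of \cite[Thm.~9]{Kramer:2005} is the capacity region, and the converse by combining Proposition~\ref{thm:OuterGeneral} with the fact that independent $X_i\sim\mCN(0,P_i)$ simultaneously maximize the three mutual-information bounds under phase fading with Rx-CSI. The only difference is cosmetic: you re-derive that maximization (max-entropy plus phase averaging of the cross terms), whereas the paper simply cites it from \cite[Thm.~8]{Kramer:2005}.
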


	\vspace{-0.05cm}
	
	\begin{proof}[$\mspace{-42mu}$ Proof]
		See subsections \ref{subsec:achieveProof}, \ref{subsec:converseProof}.
	\end{proof}

\vspace{-0.0cm}	

\begin{remark} \label{rem:achiid}
  The source-channel rate $\kappa$ in \Thmref{thm:PhaseGaussMARC} is achieved by using $X_i \sim \mCN(0,P_i), i \in \{1,2,3 \}$, all i.i.d. and independent of each other, and applying DF at the relay.
\end{remark}

%\begin{remark}
%	Note that the phase fading MARC is not degraded in the sense of \cite{Sankar:09}, see also \cite[Remark 33]{Kramer:2005}.
%\end{remark}

\begin{remark} 
\label{subsec:Fading_Discussion}
	
\vspace{-0.0cm}	
	
%	\begin{remark}
		The achievability scheme of \Thmref{thm:PhaseGaussMARC} uses the channel code construction and decoding rules
        detailed in \cite[Section III.C]{Ron:2010}. 		
		The decoding rules detailed in \cite[Section III.C]{Ron:2010} imply that the  destination channel
        decoder does not use any information provided by the  destination source decoder.		
		The only interaction between the source code and channel code is through the bin indices of the transmitted sequences.
        Hence, \Thmref{thm:PhaseGaussMARC} implies that informational separation is optimal.
		
	\end{remark}
	
	Next, we consider sources transmission over Rayleigh fading MARCs.
	
	\begin{theorem} \thmlabel{thm:RayleighGaussMARC}
		Consider the transmission of arbitrarily correlated sources $S_1$ and $S_2$ over a Rayleigh fading Gaussian MARC with
receiver side information $W$
 and relay side information $W_3$. 
Let the channel inputs be subject~to per-symbol power
constraints $\E\{\left|X_i\right|^2 \} \leq P_i, i=1,2,3$, and let the channel coefficients and the channel input powers satisfy
		
		\vspace{-0.5cm}
		
		\begin{subequations} \label{eq:RayleighRelayDecConstraints}
		\begin{align}
			1 + a_{11}^2 P_{1} + a_{31}^2P_{3} & \leq \frac{a_{13}^2 P_1} {e^{ \frac{ 1}{a_{13}^2 P_1}} E_1\left( \frac{1 }{ a_{13}^2  P_{1} }\right)} \\
			1 + a_{21}^2 P_{2} + a_{31}^2P_{3} & \leq \frac{a_{23}^2 P_2} {e^{ \frac{ 1}{a_{23}^2 P_2}} E_1\left( \frac{1 }{ a_{23}^2  P_{2} }\right)} \\
			1 + a_{11}^2 P_{1} + a_{21}^2 P_{2} + a_{31}^2P_{3} & \leq \nonumber
		\end{align}
		
		\vspace{-0.63cm}
		
		\begin{align}
			\frac{a_{23}^2P_2 - a_{13}^2P_1}{\left( e^{ \frac{1}{a_{23}^2P_2}} E_1\left(\frac{1}{a_{23}^2P_2}\right) - e^{\frac{1}{a_{13}^2P_1}}E_1\left(\frac{1}{a_{13}^2P_1}\right) \right)},
		\end{align}
		\end{subequations}
		
		\vspace{-0.1cm}
		
		\noindent where $E_1(x) \triangleq \int_{q=x}^{\infty}{\frac{1}{q}e^{-q}dq}$.  A source-channel rate $\kappa$ is achievable if
		
		\vspace{-0.6cm}
		
		\begin{subequations} \label{eq:RayleighAchievConstraints}
		\begin{align}
			H(S_1|S_2,W) &< \kappa \E_{\tU}\big\{\log_2(1 + a_{11}^2|U_{11}|^2P_1 \nonumber \\ 		
				& \:\:\: \qquad \qquad \qquad \qquad	+ a_{31}^2|U_{31}|^2P_3) \big\} \label{eq:RayleighAchievConstraints_S1}	\\
			H(S_2|S_1,W) &< \kappa \E_{\tU}\big\{\log_2(1 + a_{21}^2|U_{21}|^2P_2 \nonumber \\
			 & \:\:\: \qquad \qquad \qquad \qquad	 + a_{31}^2|U_{31}|^2P_3) \big\} \label{eq:RayleighAchievConstraints_S2}	\\
			H(S_1,S_2|W) &< \kappa \E_{\tU}\big\{\log_2(1 + a_{11}^2|U_{11}|^2P_1 + \nonumber \\
			& \qquad \qquad a_{21}^2|U_{21}|^2P_2  + a_{31}^2|U_{31}|^2P_3) \big\}, \label{eq:RayleighAchievConstraints_S1S2}
		\end{align}
		\end{subequations}
		
		\vspace{-0.15cm}
		
		\noindent where $\tU = \big(U_{11},U_{13},U_{21},U_{23},U_{31}\big)$.
%		The source-channel rate $\kappa$ is achieved using $X_i \sim \mCN(0,P_i)$, $i\in\{1,2,3\}$,
% independent of each other, and applying DF at the relay.
			Conversely, if source-channel rate $\kappa$ is achievable, then conditions \eqref{eq:RayleighAchievConstraints} are
satisfied with $<$ replaced by $\leq$.
	\end{theorem}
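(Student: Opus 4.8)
The plan is to prove both directions along the lines of \Thmref{thm:PhaseGaussMARC}: achievability from the operational-separation scheme of \Thmref{thm:separationCond}, and the converse from Proposition~\ref{thm:OuterGeneral}. For achievability I would use the MARC specialization of the scheme behind \Thmref{thm:separationCond}: a single Slepian--Wolf source code with bin rates matched to the receiver side information $W$, combined with the DF channel code and decoding rules of \cite[Section III.C]{Ron:2010}, using $X_i\sim\mCN(0,P_i)$ i.i.d.\ and mutually independent with the $V_i$ deterministic (as in Remark~\ref{rem:achiid}), and applying DF at the relay. As noted after \Thmref{thm:separationCond}, for the MARC the relay only needs to recover the bin indices, so the relay- and destination-decoding right-hand sides in \eqref{bnd:sepBased} merge: e.g.\ \eqref{bnd:rly_S1} and \eqref{bnd:dst_S1} become $H(S_1|S_2,W)<\kappa\min\{I(X_1;Y_3|X_2,V_1,X_3),\,I(X_1,X_3;Y|X_2,V_2)\}$, and analogously for source $2$ and for the pair. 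Exploiting Rx-CSI at the relay and the destination, each mutual information becomes an expectation over the (independent, uniform-phase, unit-mean Rayleigh-magnitude) fading of $\log_2(1+\cdot)$: the single-user relay terms become $\bar C_{i3}\triangleq\E\{\log_2(1+a_{i3}^2|U_{i3}|^2P_i)\}=\frac{1}{\ln 2}g(a_{i3}^2P_i)$ with $g(\gamma)\triangleq e^{1/\gamma}E_1(1/\gamma)$, the joint relay term becomes the two-user ergodic-MAC rate (again closed-form in $g(\cdot)$), and the three destination terms become exactly the right-hand sides of \eqref{eq:RayleighAchievConstraints}.

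The key step of the achievability is to show that \eqref{eq:RayleighRelayDecConstraints} makes the relay terms never bind, so that \eqref{eq:RayleighAchievConstraints} alone suffices. By Jensen's inequality (concavity of $\log_2(1+\cdot)$ and $\E\{|U_{li}|^2\}=1$) the destination rates obey, for instance, $\E\{\log_2(1+a_{11}^2|U_{11}|^2P_1+a_{31}^2|U_{31}|^2P_3)\}\le\log_2(1+a_{11}^2P_1+a_{31}^2P_3)$, so it is enough that $\log_2(1+a_{11}^2P_1+a_{31}^2P_3)\le\bar C_{13}$, i.e.\ $1+a_{11}^2P_1+a_{31}^2P_3\le e^{g(a_{13}^2P_1)}$. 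Using the first inequality of \eqref{eq:RayleighRelayDecConstraints}, which bounds $1+a_{11}^2P_1+a_{31}^2P_3$ by $a_{13}^2P_1/g(a_{13}^2P_1)$, this reduces to the scalar estimate $g(\gamma)e^{g(\gamma)}\ge\gamma$ for all $\gamma>0$; I would establish this as a lemma via the monotonicity of $g$ (which follows from $g'(\gamma)=(\gamma-g(\gamma))/\gamma^2>0$), the bound $g(\gamma)\le\ln(1+\gamma)$, and the small-$\gamma$ expansion $g(\gamma)=\gamma-\gamma^2+O(\gamma^3)$. The source-$2$ constraint is symmetric (second inequality of \eqref{eq:RayleighRelayDecConstraints}), and the joint constraint uses the third inequality together with the corresponding two-variable inequality, the two-user relay rate being a difference quotient of $t\mapsto t\,g(t)$ that I would control via the mean-value theorem. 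Thus \eqref{eq:RayleighRelayDecConstraints} is an explicit (generally non-tight) sufficient condition under which the merged constraints collapse to \eqref{eq:RayleighAchievConstraints}.

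For the converse I would specialize Proposition~\ref{thm:OuterGeneral} to the Rayleigh fading Gaussian MARC with Rx-CSI, treating the fading realizations as part of the destination's channel output; any achievable $\kappa$ then obeys, e.g., $H(S_1|S_2,W)\le\kappa\,I(X_1,X_3;Y|X_2)$ for some $p(x_1,x_2,x_3)$ with $\E\{|X_i|^2\}\le P_i$. Writing $\mathbf{H}$ for the transmitter-unknown fading coefficients, $I(X_1,X_3;Y|X_2)=\E_{\mathbf{H}}\{I(X_1,X_3;Y|X_2,\mathbf{H})\}$ since $\mathbf{H}$ is independent of the inputs, and the maximum-entropy property of the circularly-symmetric Gaussian gives $I(X_1,X_3;Y|X_2,\mathbf{H})\le\log_2(1+\var(H_{11}X_1+H_{31}X_3\,|\,X_2,\mathbf{H}))$. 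Conditioning on the fading magnitudes, the uniform and mutually independent phases $\Theta_{11},\Theta_{31}$ make the input cross-correlation term have zero conditional mean, so $\var(\cdot)\le\E\{|\cdot|^2\}$, concavity of $\log_2(1+\cdot)$, Jensen over $X_2$, and $\E\{|X_i|^2\}\le P_i$ give $I(X_1,X_3;Y|X_2)\le\E\{\log_2(1+a_{11}^2|U_{11}|^2P_1+a_{31}^2|U_{31}|^2P_3)\}$, the right-hand side of \eqref{eq:RayleighAchievConstraints_S1}; the other two bounds follow identically. This yields \eqref{eq:RayleighAchievConstraints} with $<$ replaced by $\le$, and since these bounds are attained by the achievability distribution the two characterizations coincide. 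Note that the converse needs no assumption on the channel coefficients; \eqref{eq:RayleighRelayDecConstraints} enters only the achievability.

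The main obstacle is the achievability reduction --- verifying that \eqref{eq:RayleighRelayDecConstraints} is strong enough to neutralize the relay-decoding constraints. This is the origin of the exponential-integral expressions in \eqref{eq:RayleighRelayDecConstraints}, and it hinges on the nontrivial inequalities $g(\gamma)e^{g(\gamma)}\ge\gamma$ and its two-variable counterpart, the latter complicated by the two-user relay rate being a difference quotient of $t\mapsto t\,g(t)$ rather than a plain function value. By comparison the converse is routine once one observes that the uniform, mutually independent fading phases annihilate all input cross-correlations, reducing everything to single-look ergodic quantities.
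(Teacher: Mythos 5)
Your proposal follows essentially the same route as the paper's (very terse) proof: achievability via Slepian--Wolf binning plus the DF channel code of \cite[Section III.C]{Ron:2010} with $X_i\sim\mCN(0,P_i)$ i.i.d., conditions \eqref{eq:RayleighRelayDecConstraints} guaranteeing that relay decoding never binds so that the right-hand sides of \eqref{eq:RayleighAchievConstraints} form the ergodic capacity region, and a converse from Proposition~\ref{thm:OuterGeneral} with the simultaneous maximization by independent circularly symmetric Gaussian inputs under Rx-CSI (the analogue of the step the paper takes from \cite[Thm.~8]{Kramer:2005} in the proof of \Thmref{thm:PhaseGaussMARC}). The one substantive divergence is that the paper delegates the implication ``\eqref{eq:RayleighRelayDecConstraints} $\Rightarrow$ relay constraints inactive'' entirely to \cite[Corollary B.1]{Ron:2010}, whereas you attempt to prove it. Your reduction is the right one --- Jensen on the destination side, the exact formula $\E\{\ln(1+\gamma X)\}=e^{1/\gamma}E_1(1/\gamma)=:g(\gamma)$ for exponential $X$ on the relay side, leading to the scalar inequality $g(\gamma)e^{g(\gamma)}\ge\gamma$ and its two-variable analogue for the difference quotient of $t\mapsto t\,g(t)$ --- but your sketch of the scalar lemma does not go through as written. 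The ingredients you list, namely monotonicity of $g$ and the bound $g(\gamma)\le\ln(1+\gamma)$, are \emph{upper} bounds on $g$ and therefore point the wrong way; what is needed is a lower bound, and since
\begin{equation*}
g(\gamma)e^{g(\gamma)}-\gamma=\tfrac{1}{2}\gamma^{3}+O(\gamma^{4})\quad\text{as }\gamma\to 0,
\end{equation*}
the inequality is third-order tight at the origin, so the easy lower bounds $g(\gamma)\ge\gamma/(1+\gamma)$ or $g(\gamma)\ge\ln(1+\gamma e^{-\gamma_{0}})$ (with $\gamma_{0}$ Euler's constant) are each individually insufficient. This is precisely the content of the cited Corollary B.1, so the cleanest repair is to invoke it as the paper does; a self-contained proof would need, e.g., the enveloping property of the asymptotic series of $e^{1/\gamma}E_1(1/\gamma)$ near the origin together with a separate argument for moderate and large $\gamma$, and a corresponding (more delicate) treatment of the two-user difference-quotient version. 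Your converse argument needs no such lemma and is sound.
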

	
	\begin{proof}[$\mspace{-42mu}$ Proof]
		The proof uses \cite[Corollary B.1]{Ron:2010} and follows similar arguments to those in the proof of \Thmref{thm:PhaseGaussMARC}.
	\end{proof}	
	
\begin{remark}
  The source-channel rate $\kappa$ in \Thmref{thm:RayleighGaussMARC} is achieved by using $X_i \sim \mCN(0,P_i), i \in \{1,2,3 \}$, all i.i.d. and independent of each other, and applying DF at the relay.
\end{remark}

\vspace{-0.2cm}
	
\subsection{Achievability Proof of \Thmref{thm:PhaseGaussMARC}} \label{subsec:achieveProof}
\vspace{-0.1cm}
\subsubsection{Code construction}
For $i\!=\!1,2$, assign every $\svec_i \in \mS_i^m$~to one of $2^{mR_i}$ bins independently according to a uniform
distribution on $\{1,2,\dots,2^{mR_i}\}$. Denote these assignments by~$f_i$. A channel code based on DF with rates
$\chR_1$ and $\chR_2$, and a blocklength $n$, is constructed as detailed in \cite[Section III.C]{Ron:2010}.
Transmitter $i$ has $2^{n\chR_i}$ messages, thus we require $\kappa \chR_i= R_i$.

\subsubsection{Encoding} Consider a source sequence of length $Bm$, $s_i^{Bm} \in \mS^{Bm}_i, i=1,2$.
Partition this sequence into $B$ length-$m$ subsequences, $\svec_{i,b}, b=1,2,\dots,B$. 
Similarly partition the
side information sequence $w^{Bm} \in \mW^{Bm}$,
% and $\wvec_3 \in \mW_3^{Bm}$, 
into $B$ length-$m$ subsequences.
We transmit a total of $Bm$ source samples over $B+1$ blocks of $n$ channel uses each. If we set $n=\kappa m$,
by increasing $B$ we obtain a source-channel rate $(B+1)n/Bm \rightarrow n/m = \kappa$ as $B \rightarrow \infty$.

In block $b, b=1,2,\dots,B$, source terminal $i, i=1,2$, observes $\svec_{i,b}$ and finds its corresponding
bin index $\msg_{i,b} \in \{1,2,\dots,2^{mR_i}\}$. Each transmitter sends its corresponding bin index using the
channel code described in \cite[Section III.C]{Ron:2010}.
			
Encoding at the relay: Assume that at time $b$ the relay knows % $\svec_{1,b-1}$ and $\svec_{2,b-1}$.
%The relay sends the corresponding bin indices 
$(\msg_{1,b-1},\msg_{2,b-1})$. The relay sends these bin indices using the channel code described in
\cite[Section III.C]{Ron:2010}.	
	
\subsubsection{Decoding and error probability} Conditions \eqref{eq:phaseRelayDecConstraints} imply that the
achievable channel rate region for decoding at the relay contains the achievable channel rate region for
decoding at the destination. Hence, reliable decoding of the channel code at the destination implies reliable
decoding of the channel code at the relay. When the channel coefficients and the channel input powers satisfy
conditions \eqref{eq:phaseRelayDecConstraints}, the RHS of constraints \eqref{eq:phaseAchievConstraints} is the capacity
region of the phase fading Gaussian MARC, (see \cite[Thm. 9]{Kramer:2005}). Hence, the transmitted bin indices $\{u_{1,b},u_{2,b} \}_{b=1}^B$ can be
reliably decoded at the destination if

\vspace{-0.56cm}

\begin{subequations} \label{eq:phaseAchieveChanRates}
\begin{eqnarray}
	R_1 &\leq& \kappa \log_2(1 + a_{11}^2 P_1 + a_{31}^2 P_3) \\
	R_2 &\leq& \kappa \log_2(1 + a_{21}^2 P_2 + a_{31}^2 P_3) \\
	R_1 + R_2 &\leq& \kappa \log_2(1 + a_{11}^2 P_1 + a_{21}^2 P_2 + a_{31}^2 P_3).
\end{eqnarray}
\end{subequations}

	\vspace{-0.17cm}

	\textit{Decoding the sources at the destination:} The decoded bin indices, denoted
$\tilde{\msg}_{i,b}, i\!=\!1,2, b\!=\!1,2,\dots,B$, are then given to
the source decoder at the destination. Using the bin indices and the side information $\wvec_{b}$, the
source decoder at the destination estimates $\svec_{1,b}, \svec_{2,b}$. More precisely, given the bin
indices $\tilde{\msg}_{1,b}, \tilde{\msg}_{2,b}$, it declares
$(\tilde{\svec}_{1,b},\tilde{\svec}_{2,b})$ to be the decoded sequences if it is the unique pair of sequences
that satisfies $f_1(\tsvec_{1,b})= \tilde{\msg}_{1,b}, f_2(\tsvec_{2,b})= \tilde{\msg}_{2,b}$ and
$(\tsvec_{1,b},\tsvec_{2,b},\wvec_{b}) \in \stypm(S_1,S_2,W)$.
From the Slepian-Wolf theorem \cite[Thm 14.4.1]{cover-thomas:it-book}, $(\svec_{1,b},\svec_{2,b})$
can be reliably decoded at the destination if
	
	\vspace{-0.60cm}

\begin{subequations} \label{eq:phaseAchieveDestSourceRates}
\begin{eqnarray}
	H(S_1|S_2,W) &<& R_1 \\
	H(S_2|S_1,W) &<& R_2 \\
	H(S_1,S_2|W) &<& R_1 + R_2.
\end{eqnarray}
\end{subequations}

	\vspace{-0.18cm}
			
	Combining conditions \eqref{eq:phaseAchieveChanRates} and \eqref{eq:phaseAchieveDestSourceRates}
yields \eqref{eq:phaseAchievConstraints} and completes the achievability proof.
	
\vspace{-0.15cm}

\subsection{Converse Proof of \Thmref{thm:PhaseGaussMARC}} \label{subsec:converseProof}
\vspace{-0.1cm}

Consider the necessary conditions of Proposition \ref{thm:OuterGeneral}. From \cite[Thm. 8]{Kramer:2005} it follows that for phase fading
with Rx-CSI, the mutual information
expressions on the RHS of \eqref{bnd:outr_general_dst} are simultaneously maximized by $X_1,X_2,X_3$ independent, zero-mean
complex Normal, $X_i \sim \mCN(0,P_i), i=1,2,3$ yielding the same expressions  as in \eqref{eq:phaseAchievConstraints}.
	Therefore, for phase fading MARCs, when conditions \eqref{eq:phaseRelayDecConstraints} % and \eqref{eq:phaseEntropyConstraints}
hold, the conditions in \eqref{eq:phaseAchievConstraints} coincide with the necessary conditions of Proposition \ref{thm:OuterGeneral},
and the conditions in \eqref{eq:phaseAchievConstraints} are satisfied with $\leq$ instead of $<$.

\vspace{-0.15cm}

\subsection{Fading MABRCs}
\vspace{-0.1cm}
The optimality of informational separation can be established for MABRCs 
using the results for the MARC  with additional constraints, as indicated in the following theorem:
\begin{theorem}
\label{thm:separation-MABRCs}
    For phase fading MABRCs for which the conditions in \eqref{eq:phaseRelayDecConstraints} hold together with
%    
%  \vspace{-0.5cm}
%    
%    $ $
%    
%	\vspace{-0.7cm}
%	
	\vspace{-0.20cm}
	\begin{subequations} \label{eq:phaseEntropyConstraints}
	\begin{eqnarray}
		H(S_1|S_2,W_3) &\leq&  H(S_1|S_2,W) \\
		H(S_2|S_1,W_3) &\leq&  H(S_2|S_1,W) \\
		H(S_1,S_2|W_3) &\leq&  H(S_1,S_2|W),
	\end{eqnarray}
	\end{subequations}

	\vspace{-0.25cm}

    \noindent
    the maximum achievable source-channel rate $\kappa$ satisfies \eqref{eq:phaseAchievConstraints}.
    The same statement holds for Rayleigh fading MABRCs with \eqref{eq:RayleighRelayDecConstraints} replacing \eqref{eq:phaseRelayDecConstraints} and \eqref{eq:RayleighAchievConstraints} replacing \eqref{eq:phaseAchievConstraints}.
\end{theorem}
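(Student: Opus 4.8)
The plan is to prove the claimed characterization in two halves. In the \emph{direct} part the achievable source--channel rate of \Thmref{thm:separationCond} is specialized to the fading Gaussian MABRC with i.i.d.\ Gaussian channel inputs, and is shown to collapse, under \eqref{eq:phaseRelayDecConstraints} (resp.\ \eqref{eq:RayleighRelayDecConstraints}) together with \eqref{eq:phaseEntropyConstraints}, to \eqref{eq:phaseAchievConstraints} (resp.\ \eqref{eq:RayleighAchievConstraints}). In the \emph{converse} part one invokes the necessary conditions of Proposition~\ref{thm:OuterGeneralMABRC} together with the maximizing-input argument already used in Subsection~\ref{subsec:converseProof}. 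Together these give: $\kappa$ is achievable if \eqref{eq:phaseAchievConstraints} holds, and only if it holds with $<$ replaced by $\le$.

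For the direct part I would evaluate the bounds \eqref{bnd:sepBased} of \Thmref{thm:separationCond} for the input distribution of Remark~\ref{rem:achiid} --- $X_i\sim\mCN(0,P_i)$, $i\in\{1,2,3\}$, mutually independent, with $V_1,V_2$ degenerate --- over the phase fading channel \eqref{eq:FadeGaussMARC} with Rx-CSI. Because the relay observation \eqref{eq:FadeGaussMARC_rly} does not involve $X_3$, the relay constraints \eqref{bnd:rly_S1}--\eqref{bnd:rly_S1S2} become $H(S_1|S_2,W_3)<\kappa\log_2(1+a_{13}^2P_1)$, $H(S_2|S_1,W_3)<\kappa\log_2(1+a_{23}^2P_2)$ and $H(S_1,S_2|W_3)<\kappa\log_2(1+a_{13}^2P_1+a_{23}^2P_2)$, whereas the destination constraints \eqref{bnd:dst_S1}--\eqref{bnd:dst_S1S2} become exactly \eqref{eq:phaseAchievConstraints}. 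The crucial observation is that the three relay constraints are then \emph{redundant}: chaining \eqref{eq:phaseEntropyConstraints}, \eqref{eq:phaseAchievConstraints} and \eqref{eq:phaseRelayDecConstraints} and using monotonicity of $t\mapsto\log_2(1+t)$ gives, for instance, $H(S_1|S_2,W_3)\le H(S_1|S_2,W)<\kappa\log_2(1+a_{11}^2P_1+a_{31}^2P_3)\le\kappa\log_2(1+a_{13}^2P_1)$, and symmetrically for the other two. Hence \eqref{eq:phaseAchievConstraints} alone implies all six conditions of \Thmref{thm:separationCond} and therefore achievability of $\kappa$. The passage from the DM statement of \Thmref{thm:separationCond} to the power-constrained Gaussian channel is carried out as in Subsection~\ref{subsec:achieveProof} using the channel code of \cite[Section III.C]{Ron:2010}; the only new ingredient relative to \Thmref{thm:PhaseGaussMARC} is the second Slepian--Wolf code, with side information $W_3$, that enables reconstruction at the relay and that produces the relay constraints just shown to be inactive.

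For the converse, any MABRC code in particular meets the destination's reconstruction requirement, so Proposition~\ref{thm:OuterGeneralMABRC} applies; from it we use only the destination bounds \eqref{bnd:outr_general_dst}, which hold for some $p(x_1,x_2,x_3)$. By \cite[Thm. 8]{Kramer:2005}, for phase fading with Rx-CSI the three mutual-information functionals on the right-hand side of \eqref{bnd:outr_general_dst} are simultaneously maximized by mutually independent zero-mean complex Normal inputs $X_i\sim\mCN(0,P_i)$, and their values then coincide with the right-hand side of \eqref{eq:phaseAchievConstraints}; hence every achievable $\kappa$ satisfies \eqref{eq:phaseAchievConstraints} with $<$ replaced by $\le$. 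This direction uses neither \eqref{eq:phaseRelayDecConstraints} nor \eqref{eq:phaseEntropyConstraints}: those hypotheses serve only to make the achievable region match the outer bound.

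The Rayleigh statement is obtained by the same route, with \cite[Corollary B.1]{Ron:2010} replacing \cite[Thm. 9]{Kramer:2005} for the DF rate region of the Rayleigh fading MARC in the direct part (so that the relay constraints read $H(S_1|S_2,W_3)<\kappa\,\E_{\tU}\{\log_2(1+a_{13}^2|U_{13}|^2P_1)\}=\kappa\,\tfrac{1}{\ln 2}e^{1/(a_{13}^2P_1)}E_1(1/(a_{13}^2P_1))$, and analogously for the other two, while the destination constraints read \eqref{eq:RayleighAchievConstraints}) and with the Rayleigh counterpart of the maximizing-input result in the converse. I expect the main obstacle to be the bookkeeping in this last comparison: verifying, term by term and in particular for the $E_1$-type ergodic-rate expressions, that \eqref{eq:RayleighRelayDecConstraints} --- after a Jensen step that bounds $\E_{\tU}\{\log_2(1+a_{11}^2|U_{11}|^2P_1+a_{31}^2|U_{31}|^2P_3)\}$ and its analogues by their single-letter counterparts --- is precisely the condition under which every relay-decoding constraint of \Thmref{thm:separationCond} is dominated by the corresponding destination-decoding constraint, so that, given \eqref{eq:phaseEntropyConstraints}, \eqref{eq:RayleighAchievConstraints} alone suffices for achievability.
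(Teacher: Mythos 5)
Your proposal is correct and follows essentially the same route as the paper: achievability via the separation-based DF scheme with an added Slepian--Wolf decoder at the relay, where \eqref{eq:phaseRelayDecConstraints} makes the relay's channel-decoding region contain the destination's and \eqref{eq:phaseEntropyConstraints} does the same for the source-decoding regions (so the relay constraints are inactive, exactly your redundancy chain), and a converse that uses only the destination bounds of the outer bound together with the maximizing-input result of \cite[Thm. 8]{Kramer:2005}. Your flagged ``bookkeeping'' for the Rayleigh case is likewise handled in the paper by deferring to \cite[Corollary B.1]{Ron:2010}, so no substantive gap remains.
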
	
	\vspace{-0.1cm}

	\begin{remark}
		Conditions \eqref{eq:phaseEntropyConstraints} imply that for the scenario described in \Thmref{thm:PhaseGaussMARC}
regular encoding and irregular encoding yield the same source-channel achievable rates (see discussion in Section
\ref{subsec:DM_MARC_Discussion}), hence the channel code construction of \cite[Section III.C]{Ron:2010} can be used.
	\end{remark}

\begin{remark}
    The proof for MABRCs differs from the achievability proof of section \ref{subsec:achieveProof} only due to decoding the source 
    sequences at the relay. This decoding follows similar arguments to the decoding of
the sources at the destination. Conditions \eqref{eq:phaseEntropyConstraints} imply that reliable decoding of the sources
at the destination guarantees reliable decoding of the sources at the relay, since the relay achievable source rate region
contains the destination achievable source rate region.
\end{remark}
%\section{Discussion} \label{sec:discussion}
\vspace{-0.3cm}

\end{document}